\newcommand{\rvY}{\texttt{Y}}
\definecolor{sacramentostategreen}{rgb}{0.0, 0.34, 0.25}
\definecolor{green1}{rgb}{0,0.5,0}
\definecolor{magenta}{rgb}{1.0, 0.11, 0.81}
\definecolor{mulberry}{rgb}{0.77, 0.29, 0.55}
\definecolor{xgray}{rgb}{0.9, 0.9, 0.9}
\def \blue{\color{blue}}
\def \bes{\begin{equation*}}
\def \ees{\end{equation*}}
\def \bas{\begin{align*}}
\def \eas{\end{align*}}
\def \be{\begin{equation}}
\def \ee{\end{equation}}
\def \bbm{\begin{bmatrix}}
\def \ebm{\end{bmatrix}}
\def \rvA{\texttt{A}}
\def \rvB{\texttt{B}}
\def \rvR {\texttt{R}}
\def \rvX{\texttt{X}}
\def \rvY{\texttt{Y}}
\def \F{\mathbb{F}}
\newcommand{\Aij}{\rvA_{\{i,j\}}}
\newcommand{\Rij}{\rvR_{\{i,j\}}}
\newcommand{\Xij}[2]{\rvX_{\{#1,#2\}}}
\newcommand{\bfx}{{\boldsymbol x}}
\newcommand{\bfA}{{\mathbf A}}
\newcommand{\bfB}{{\mathbf B}}
\newcommand{\bfR}{{\mathbf R}}
\newcommand{\bfX}{{\mathbf X}}
\newcommand{\bfXij}{\bfX_{\{i,j\}}}
\newcommand{\I}{\textrm{I}}
\newcommand{\mutinf}{\I_q}
\newcommand{\kl}[2]{\text{D}_{\text{KL}} \left(#1 \Vert #2 \right)}
\newcommand{\entropy}{\textrm{H}_q}
\newcommand{\leakage}[1]{\textrm{L}_#1}
\newcommand{\sr}{\ensuremath s_{\bfR}}
\newcommand{\sar}{\ensuremath s_{\bfA+\bfR}}
\newcommand{\savg}{\ensuremath s_{\text{avg}}}
\newcommand{\sdiff}{\ensuremath s_{\delta}}
\newcommand{\sparsity}{\textrm{S}}
\newcommand{\pz}{p_{1}}
\newcommand{\pzinv}{p_{1}^{\text{inv}}}
\newcommand{\pnz}{p_{3}}
\newcommand{\pext}{p_{2}}
\newcommand{\pnzinv}{p_{2,3}^\text{inv}}
\newcommand{\worker}{node}
\newcommand{\Worker}{Node}
\newcommand{\workers}{nodes}
\newtheorem{theorem}{Theorem}
\newtheorem{lemma}{Lemma}
\newtheorem{remark}{Remark}
\tikzset{brace/.style={decorate, decoration={brace}},
 brace mirrored/.style={decorate, decoration={brace,mirror}},
}
\newcounter{brace}
\newcommand{\Fq}{\mathbb{F}_q}
\newcommand{\Fqstar}{\Fq^\star} 
\newcommand{\pa}[1][a]{\mathrm{P}_{\rvA}}
\newcommand{\paandapr}{\mathrm{P}_{\rvA,\rvA+\rvR}}
\newcommand{\papr}{\mathrm{P}_{\rvA+\rvR}}
\newcommand{\paandr}[1][a]{\mathrm{P}_{\rvA,\rvR}}
\newcommand{\pr}[1][r]{\mathrm{P}_{\rvR}}
\newcommand{\optimizer}{\min\limits_{\mathcal{P}}} %
\newcommandx{\pra}[2][1=r, 2=a]{p_{#1,#2}}
\newcommand{\lossopt}{\mathrm{L}_\text{opt}}
\newcommand{\totalleakage}{\mathrm{L}(\pz,\pzinv,\pext,\pnz,\pnzinv)}
\newcommand{\loss}{\mathcal{L}(\pz,\pzinv,\pext,\pnz,\pnzinv,\lambda_1,\lambda_2,\lambda_3,\lambda_4)}
\newcommand{\lossabbrev}{\mathcal{L}}
\newcommand{\grad}{\nabla_{\pz,\pzinv,\pnz,\pext,\pnzinv,\lambda_1,\lambda_2,\lambda_3,\lambda_4}}
\newcommand{\srinv}{\sr^\text{inv}}
\newcommand{\sarinv}{\sar^\text{inv}}
\newcommand{\qfac}{\theta}
\newcommand{\setsubsharesar}[1]{\ensuremath{\mathcal{W}_#1^{\bfA+\bfR}}}
\newcommand{\setsubsharesr}[1]{\ensuremath{\mathcal{W}_#1^{\bfR}}}
\newcommand{\m}{\ensuremath{r}}
\newcommand{\n}{\ensuremath{\ell}}
\newcommand{\nbworkers}{\ensuremath{n}}
\newcommand{\str}{\ensuremath{\xi}}
\newcommand{\transpose}{\mathsf{T}}
\newcommand\scalemath[2]{\scalebox{#1}{\mbox{\ensuremath{\displaystyle #2}}}}
\begin{document}

\title{{Efficient Private Storage of Sparse Machine Learning Data}}
\vspace{-0.4cm}
\author{\IEEEauthorblockN{
Marvin Xhemrishi, Maximilian Egger and Rawad Bitar}\\ 
\vspace{-0.48cm}
\IEEEauthorblockA{Institute for Communications Engineering, Technical University of Munich, Munich, Germany\\
\{\texttt{marvin.xhemrishi, maximilian.egger, rawad.bitar}\}\texttt{@tum.de}}
\thanks{Thiw work was partially funded by the DFG (German Research Foundation) project under Grant Agreement No. WA 3907/7-1 {and by the Technical University of Munich - Institute for Advanced Studies, funded by the German Excellence Initiative
and European Union Seventh Framework Programme under Grant Agreement
No. 291763.}
}
\vspace{-1cm}
}
\vspace{-1cm}

\maketitle
\begin{abstract}
 We consider the problem of maintaining sparsity in private distributed storage of confidential machine learning data. In many applications, e.g., face recognition, the data used in machine learning algorithms is represented by sparse matrices which can be stored and processed efficiently. However, mechanisms maintaining perfect information-theoretic privacy require encoding the sparse matrices into randomized dense matrices. It has been shown that, under some restrictions on the storage nodes, sparsity can be maintained at the expense of relaxing the perfect information-theoretic privacy requirement, i.e., allowing some information leakage. In this work, we lift the restrictions imposed on the storage nodes and show that there exists a trade-off between sparsity and the achievable privacy guarantees. We focus on the setting of non-colluding nodes and construct a coding scheme that encodes the sparse input matrices into matrices with the desired sparsity level while limiting the information leakage.
\end{abstract}

\section{Introduction}\label{sec:Intro}
In the era of big data, efficient and private distributed data storage became a necessity. Analyzing tremendous amount of information is now possible thanks to machine learning algorithms that take data as input and output a compressed model that represents the data. A new input data can then be compared to the stored data by processing it using only the model. In several machine learning applications, such as face recognition, the input data is represented by sparse matrices (having a large number of zero entries) \cite{wright2008robust,madarkar2021sparse}. Furthermore, the model representing a neural network is also known to be represented and stored in the form of a sparse matrix.

We focus on efficient and private distributed storage of sparse data matrices. Sparse matrices are known to be efficiently stored and operated on. The privacy measure we use is information-theoretic privacy as it assumes no bound on the computational power of the potential eavesdroppers. The techniques used in the literature to maintain information-theoretic privacy of matrices destroy the structure of the matrix, i.e., result in dense matrices. On a high level, maintaining information-theoretic privacy of a matrix $\bfA$ can be ensured by generating a random matrix $\bfR$ independently from $\bfA$ and uniformly at random from the same alphabet to which $\bfA$ belongs. Then, storing $\bfR$ and $\bfA+\bfR$ (where the addition is element-wise), ensures the privacy of $\bfA$. Observe here that $\bfR$ and $\bfA+\bfR$ are statistically independent from $\bfA$. However, they are both dense matrices. Hence, perfect privacy of $\bfA$ is ensured, but its sparsity cannot be leveraged to store $\bfR$ and $\bfA+\bfR$ efficiently.

Our goal is to design schemes that allow a private distributed storage of $\bfA$ while leveraging its sparsity. Going back to our illustrative example, this is made possible by designing the random matrix $\bfR$ dependently on $\bfA$ so that both $\bfR$ and $\bfA+\bfR$ are sparse matrices and reveal as little information about $\bfA$ as possible. The matrices $\bfR$ and $\bfA+\bfR$ are called shares.

Techniques ensuring information-theoretic privacy in distributed storage date back to the seminal works of Shamir and Blakely~\cite{S79,Blakley1979} which can be seen as a generalization of Shannon's one-time pad~\cite{shannon_one_time_pad}. Those techniques are known as threshold secret sharing schemes. In~\cite{McESa81}, McElice and Sarwate generalized threshold secret sharing further by connecting them to Reed-Solomon codes. An $(n,k,z)$ secret sharing encodes $k-z$ files into $n$ shares, stored on $n$ different nodes, such that any $z$ shares reveal no information about the files and any $k$ shares suffice to reconstruct the stored files.

To cope with the increase in the amount of data stored and handled, efficient secret sharing schemes are being investigated. For example, secret sharing schemes with communication efficiency are studied in~\cite{4418504,HLKBtrans,BRIT17,martinez2018communication}. The goal is to reduce the amount of data communicated to reconstruct the secret at the expense of increasing the amount of accessed nodes. Other efficient modern secret sharing schemes are secure (against malicious attacks) and repairable secret sharing \cite{shah2011information,pawar2011securing,8469091}, secure communication-efficient secret sharing schemes \cite{bitar2020communication} and communication-efficient secret sharing schemes with small share size (data storage) \cite{9594824}. Efficient security is guaranteed by leveraging the weakness of the malicious nodes and transforming errors into erasures to reduce the required redundancy. We refer the interested reader to \cite{pawar2011securing,bitar2020communication} for more details. However, all the  aforementioned coding schemes destroy the structure of the input data and generate shares (data to be stored) that look completely random, i.e., dense shares.

Motivated by the efficient use of sparse matrices, we consider secret sharing schemes that encode sparse data matrices into sparse shares. Hence, allowing for efficient storage of the shares since sparse shares can be compressed and are also efficiently processed in case they are used in computations \cite{adaptive_sparse_matrix,implementing_sparse_matrix_vector}. Sparsity in non-private distributed linear computations with tolerance of slow or unresponsive compute nodes, called stragglers, is shown to speed up the computation process. The authors of~\cite{coded_sparse_mm} considered the problems of encoding two sparse matrices into smaller sparse chunks sent to external nodes for computations. The proposed scheme relies on using LT codes~\cite{LT_codes} with a sparse encoding principle, i.e., using a Soliton distribution with high occurrence of low degrees. In~\cite{coded_sparse_matrix_leverages_partial_stragglers}, the authors propose to divide the matrices into smaller ones, without any encoding and use fractional repetition codes \cite{el2010fractional} to distribute the small matrices to the computing nodes. For an extra layer of protection against stragglers, each computing node could also receive an additional encoded sub-matrix.

The problem of designing secret sharing schemes with sparse shares has been recently initiated in \cite{sparse_ISIT} for distributed private matrix-vector multiplication. The multiplication being linear with no privacy requirement on the vector, the problem of private distributed storage and private distributed computing become equivalent. In the terminology of this paper, the authors of \cite{sparse_ISIT} construct a secret sharing scheme that trades privacy guarantees for the sparsity of the shares, since insisting on perfect information-theoretic privacy does not allow any sparsity in the shares. The setting considered in \cite{sparse_ISIT} consists of two non-communicating clusters of nodes. A special type of shares is created for each cluster, hence separately controlling the sparsity and the amount of information leaked from each cluster of nodes. {For a pure storage scenario, it can be beneficial to first compress the sparse private data and then perform classical secret sharing. However, we consider the general case where the stored data is later used in form of linear computations, similar to the setting in \cite{sparse_ISIT}.}%

In this work, we remove the constraint of non-communicating clusters of nodes. For a fixed desired sparsity of the shares, we provide a convex minimization problem that outputs the minimum leakage possible. Under some assumptions, we solve the problem and construct secret sharing schemes that achieve the minimum leakage of information.

The paper is organized as follows. In Section~\ref{sec:Preliminaries}, we set the notation and formalize the considered problem. We show the existence of a trade-off between sparsity and privacy in Section~\ref{sec:trade-off}. For a special parameter regime, we give in Section~\ref{sec:scheme} a scheme that optimizes the leakage for a desired sparsity guarantee. Section~\ref{sec:conc} concludes the paper.

\section{Preliminaries}\label{sec:Preliminaries}
\emph{Notation:} Matrices and vectors are denoted by uppercase and lowercase bold letters, e.g., $\bfX$ and $\bfx$, respectively. The $(i,j)$-th entry of a matrix $\bfX$ is denoted by $\bfXij$. We use uppercase \emph{typewriter} letters for random variables, e.g., $\rvY$. The random variables representing a matrix $\bfX$ and its $(i,j)$-th entry $\bfXij$ are denoted respectively by $\rvX$ and $\Xij{i}{j}$. A finite field of cardinality $q$ is denoted by $\F_q$ and we define $\F_q^* \triangleq \F_q\setminus \{0\}$. Calligraphic letters denote sets, e.g., $\mathcal{X}$. For a positive integer $b$, we define $[b]\triangleq \{1,2,\dots, b\}$. Given $b$ random variables $\rvY_1,\dots,\rvY_b$ and a set $\mathcal{I}\subseteq [b]$, we denote by $\{\rvY_i\}_{i\in\mathcal{I}}$ the set of random variables indexed by $\mathcal{I}$, i.e., $\{\rvY_i\}_{i\in\mathcal{I}} \triangleq \{\rvY_{i} | i\in \mathcal{I}\}$. For a random variable $\rvX \in \mathbb{F}_q$ we denote its probability mass function (PMF) over $\mathbb{F}_q$ by $\textrm{P}_\rvX = \left[p_1, p_2, \dots, p_q \right]$, i.e., $\Pr(\rvX = i) = p_i$ for all $i\in \mathbb{F}_q$, and its $q$-ary entropy by $\textrm{H}_q(\rvX)$ and $\textrm{H}_q(\left[p_1, p_2, \dots, p_q \right])$ interchangeably. Given two random variables $\rvX$ and $\rvY$, their $q$-ary mutual information is denoted by $\I_q(\rvX; \rvY)$. {The Kullback-Leibler (KL) divergence between the PMFs of random variables $\rvX$ and $\rvY$ reads as $\kl{\textrm{P}_{\rvX}}{\textrm{P}_{\rvY}}$.}
We follow \cite{sparse_ISIT} and define the sparsity of a matrix as follows.

{\definition(Sparsity level of a matrix) {The sparsity level $\sparsity(\rvX)$ of a matrix $\rvX$ with entries independently and identically distributed is equal to the probability of the $(i,j)$-th entry $\rvX_{\{i,j\}}$ of $\rvX$ being equal to $0$, i.e., $$ \sparsity(\rvX) = \Pr\{\Xij{i}{j} = 0\}.$$}}

\emph{System model:} We study secret sharing schemes that take a sparse matrix $\bfA$ as input and output two sparse shares $\bfR$ and $\bfA+\bfR$ each in $\F_q^{\m\times \n}$. Dividing the shares row-wise and carefully applying a fractional repetition code gives a secret sharing scheme with larger parameters, $n$ and $k$. The entries of the input matrix $\bfA$ are assumed to be independently and identically distributed, i.i.d. Though restrictive, this distribution serves as a starting point towards understanding the trade-off between sparsity and privacy of secret sharing schemes. In particular, we assume that $\Pr\{\Aij = 0\} = s$ and $\Pr\{\Aij = a\} = \frac{1-s}{q-1}$ for all $a \in \F_q^*$. Hence, $\bfA$ is considered to have a sparsity level $\sparsity(\rvA) = s > q^{-1}$ since the special case where $s\leq q^{-1}$ can be satisfied using classical secret sharing schemes.  We focus on information-theoretic privacy. Given two random variables $\rvA$ and $\rvB$, we say that observing a realization $\bfB$ of $\rvB$ leaks $\varepsilon \triangleq \mutinf(\rvA;\rvB)$ information about $\rvA$. If the leakage $\varepsilon$ is zero, we say that perfect privacy is attained.

A classical secret sharing scheme generates a random matrix $\rvR$, independently from $\rvA$ and uniformly at random from the same alphabet from which $\rvA$ is drawn. The scheme then outputs two shares that are realizations of the random variable $\rvR$ and the random variable $\rvA+\rvR$ such that $\mutinf(\rvA;\rvR) = \mutinf(\rvA;\rvA+\rvR) = 0$. The sparsity of $\rvR$ and $\rvA+\rvR$ is equal to $q^{-1}$. A \emph{sparse secret sharing scheme} outputs two shares that are realizations of two random variables $\rvR$ and $\rvA+\rvR$ such that $\mutinf(\rvA;\rvR) = \varepsilon_1 \geq 0$ and $\mutinf(\rvA;\rvA+\rvR) = \varepsilon_2\geq 0$. The main difference is that the random variable $\rvR$ is drawn \emph{dependently} from $\rvA$. Hence, the sparsity of $\rvR$ and $\rvA+\rvR$ can be made larger than $q^{-1}$ at the expense of choosing $\varepsilon_1> 0$ and/or $\varepsilon_2>0$.

We follow the nomenclature of \cite{sparse_ISIT} and call the share $\bfR$ as the \emph{padding matrix} and the share $\bfA+\bfR$ as the \emph{padded matrix}. We define $\sr \triangleq \sparsity(\rvR)$ and $\sar \triangleq \sparsity(\rvA+\rvR)$ to be the sparsity levels of the shares and $\leakage{1} \triangleq \mutinf(\rvA;\rvR)$ and $\leakage{2} \triangleq \mutinf(\rvA;\rvA+\rvR)$ to be the information leakage from each share. For given sparsity levels $\sr,\sar> q^{-1}$, we provide a convex minimization problem whose solution gives the smallest possible leakage. For a special regime, we solve the problem and provide schemes that attain the smallest leakage for the desired sparsity. We observe that the leakage is minimized when $\sr$ is chosen to be equal to $\sar$ and that for this case, the minimum leakage increases with the desired value of $\sr$.

\begin{remark}
In practice, the positive leakage can be understood as leaking some side information about the private data. For example, if the private matrix represents a face, leaking a small amount of information could be understood as leaking some details about the face, such as the size or the contour of the face, but not leaking the whole face.
\end{remark}

\section{Minimizing Leakage for Fixed Sparsity} \label{sec:trade-off}
The main idea behind creating sparse shares in a secret sharing scheme is to design the random matrix $\rvR$ dependently from the input matrix $\rvA$. If $\rvR$ is generated uniformly at random from $\F_q$, then perfect privacy is achieved, but no sparsity in $\rvR$ and $\rvA+\rvR$ is maintained. 

For desired sparsity levels $\sr$ and $\sar$, the problem then translates to designing a probability distribution of $\rvR$ that minimizes the leakage. Since the entries $\rvA_{\{i,j\}}$ of $\rvA$ are assumed to be independent, we treat the entries of $\rvR$ independently. Hence, we need to find values $p_{r,a} \triangleq \Pr(\rvR_{\{i,j\}} = r|\rvA_{\{i,j\}} = a)$ for all $r,a \in \mathbb{F}_q$ that form a probability distribution. The sparsity constraints then restrict the $p_{r,a}$'s to ensure that $\Pr(\rvR_{\{i,j\}} = 0) = \sr$ and $\Pr(\rvR_{\{i,j\}} + \rvA_{\{i,j\}} = 0) = \sar$.

The choice of the $\pra$'s affects the leakage $\leakage{1}$ and $\leakage{2}$. To minimize the leakage, we analyze the total leakage resulting from the sum $\leakage{1}+\leakage{2}$. Our interest in minimizing the sum of the leakages stems from the need to use a fractional repetition code which may assign shares of the form $\rvR$ and $\rvA+\rvR$ to a certain node. By writing $\leakage{1}+\leakage{2}$ using the definition of mutual information, minimizing the total leakage subject to the sparsity constraints results in the following minimization problem. Let $\mathcal{P} \triangleq \{p_{r,a}: r,a\in \mathbb{F}_q\}$, then
\begin{align*}
    \lossopt &= \optimizer \leakage{1} + \leakage{2} = \optimizer \mutinf\left(\rvR; \rvA\right) + \mutinf\left(\rvA+\rvR; \rvA\right) \\
    &=\optimizer \begin{aligned}[t] &\kl{\paandr}{\pa \pr} + \kl{\paandapr}{\pa \papr} \end{aligned} \\
    &=\optimizer \!\! \sum_{a,b\in\mathbf{F}_q} \!\! \pa(a) \left( \pra[b-a][a] \log\frac{\pra[b-a][a]}{\papr(b)} + \pra[b][a] \log\frac{\pra[b][a]}{\pr(b)} \right)\!.
\end{align*}
subject to
\begin{align*}
    \forall a\in\Fq: \pra[0][a] + \sum_{r\in\Fqstar} \pra[r][a] - 1 &= 0 \\[-5pt]
    \pra[0][0] \cdot \pa(0) + \sum_{a\in\Fqstar} \pra[0][a] \cdot \pa(a) -\sr &= 0 \\[-5pt]
    \pra[0][0] \cdot \pa(0) + \sum_{a\in\Fqstar} \pra[-a][a] \cdot \pa(a) -\sar &= 0
\end{align*}
This non-linear convex optimization problem with multiple linear constraints can be solved using the method of Lagrange multipliers \cite{rockafellar1993lagrange}. For analytical tractability and for ease of analysis, we restrict our attention to PMFs of $\rvR$ of the following form.
\begin{align}
 \label{eq:dependent_on_0}
    \Pr\{\Rij = r \lvert \Aij = 0\} \!&\!= \begin{cases} 
    \pz, &r = 0 \\   \pzinv , &r \neq 0,
    \end{cases}\\
\label{eq:dependent_on_nz}
    \Pr\{\Rij = r \lvert \Aij = a\} \!&\!= \begin{cases} 
    \pext, &r = 0 \\ \pnz, &r = -a \\ \pnzinv , &r \not\in \{0,-a\},
    \end{cases}
\end{align}

where $r\in \F_q$, $a \in \F_q^*$, and $-a$ is the additive inverse of $a$ in $\F_q^*$. Further, $\pz,\pext,\pnz, \pzinv \triangleq (1-\pz)/(q-1)$ and $\pnzinv \triangleq (1-\pext-\pnz)/(q-2)$ are non-negative numbers smaller than $1$. This distribution puts an emphasis on three ideas: inheriting in $\rvR$ and in $\rvA+\rvR$ a zero from $\rvA$ (with probability $\pz$); creating a zero in $\rvR$ at the expense of revealing the $(i,j)$-th entry of $\rvA$ in $\rvA+\rvR$ (with probability $\pext$); and creating a zero\footnote{Note that compared to \cite{sparse_ISIT}, this construction is symmetric, i.e., it allows creating zeros in both $\rvR$ and $\rvA+\rvR$.} in $\rvA+\rvR$ at the expense of revealing the additive inverse of the $(i,j)$-th entry of $\rvA$ in $\rvR$ (with probability $\pnz$).

The sparsity of $\rvR$ and $\rvA+\rvR$ and their leakage about $\rvA$ when generated as shown above are stated in \cref{lemma:sparsity,lemma:leakage}, respectively.

\begin{lemma}\label{lemma:sparsity}
Following the rules in \cref{eq:dependent_on_0,eq:dependent_on_nz} to construct two matrices $\bfR$ and $\bfA+\bfR$ dependently on $\rvA$ with sparsity $s$, results in the following sparsity levels
\begin{align*}
    \sr &= \pz s + \pext (1-s), \\
    \sar &= \pz s + \pnz(1-s).
\end{align*}
\end{lemma}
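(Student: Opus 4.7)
The plan is to compute $\sr = \Pr\{\Rij = 0\}$ and $\sar = \Pr\{\Aij + \Rij = 0\}$ directly from the definitions by applying the law of total probability, conditioning on whether $\Aij = 0$ or $\Aij = a \in \Fq^\star$. Since by assumption the entries of $\rvA$ are i.i.d.\ with $\Pr\{\Aij = 0\} = s$ and $\Pr\{\Aij = a\} = (1-s)/(q-1)$ for $a \in \Fq^\star$, and the conditional PMFs of $\Rij$ are given by \eqref{eq:dependent_on_0} and \eqref{eq:dependent_on_nz}, both marginal probabilities are sums of $q$ explicit terms that collapse nicely by symmetry.

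Concretely, for $\sr$ I would write
\begin{equation*}
\sr = \Pr\{\Rij = 0 \mid \Aij = 0\} \Pr\{\Aij = 0\} + \sum_{a\in\Fqstar} \Pr\{\Rij = 0 \mid \Aij = a\} \Pr\{\Aij = a\},
\end{equation*}
and substitute $\Pr\{\Rij = 0\mid \Aij=0\}=\pz$ from \eqref{eq:dependent_on_0} and $\Pr\{\Rij=0\mid \Aij=a\}=\pext$ from \eqref{eq:dependent_on_nz}. The second term becomes $\pext \sum_{a\in\Fqstar} (1-s)/(q-1) = \pext(1-s)$, yielding $\sr = \pz s + \pext(1-s)$.

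For $\sar$ I would observe that $\{\Aij + \Rij = 0\} = \{\Rij = -\Aij\}$ and again split on $\Aij$. When $\Aij = 0$, this event is $\{\Rij = 0\}$, contributing $\pz \cdot s$. When $\Aij = a \in \Fqstar$, the relevant conditional probability is $\Pr\{\Rij = -a \mid \Aij = a\} = \pnz$ from the second case of \eqref{eq:dependent_on_nz} (using that $-a \in \Fqstar$ and $-a \neq 0$). Summing over $a \in \Fqstar$ again gives $\pnz(1-s)$, so $\sar = \pz s + \pnz(1-s)$.

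There is essentially no obstacle: both identities are one line of conditional probability once the definition of sparsity in the Definition above is combined with the i.i.d.\ assumption and with \eqref{eq:dependent_on_0}--\eqref{eq:dependent_on_nz}. The only care required is to notice that the symmetric structure of \eqref{eq:dependent_on_nz} assigns the same conditional probability $\pext$ (resp.\ $\pnz$) to the outcome $r=0$ (resp.\ $r=-a$) regardless of the specific nonzero value $a$, which is precisely what makes the sum over $\Fqstar$ collapse without having to track the particular additive inverses.
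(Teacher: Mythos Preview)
Your proposal is correct and is exactly the straightforward total-probability computation the paper has in mind; the paper in fact omits the proof entirely, stating only that it ``is straightforward and is omitted for brevity.'' There is nothing to add or compare.
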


The proof is straightforward and is omitted for brevity. Note that, by construction, for $0\leq \pext,\pnz\leq 1$, the difference $\vert \sar - \sr \vert$ can be at most $(1-s)$.

\begin{lemma} \label{lemma:leakage}
Considering PMFs of the form given in \cref{eq:dependent_on_0,eq:dependent_on_nz}, the total leakage $\totalleakage \triangleq \mathrm{L}_1+\mathrm{L}_2$ is given by
\begin{align*}
&\totalleakage = s \bigg[ \pz \bigg( \log \frac{\pz}{\sar} + \log \frac{\pz}{\sr} \bigg) \nonumber\\
&+(q-1)\pzinv \bigg( \log \frac{\pzinv}{\sarinv} + \log \frac{\pzinv}{\srinv}\bigg) \bigg] + (1-s) \nonumber\\
&\cdot \bigg[ \pext \bigg( \log \frac{\pext}{\sar} + \log \frac{\pext}{\srinv} \bigg) + \pnz \bigg( \log \frac{\pnz}{\sarinv} + \log \frac{\pnz}{\sr} \bigg) \nonumber\\ &+(q-2) \pnzinv \bigg( \log \frac{\pnzinv}{\sarinv} + \log \frac{\pnzinv}{\srinv} \bigg)
\bigg],
\end{align*}
where $\srinv \triangleq (1-\sr)/(q-1)$, $\sarinv \triangleq (1-\sar)/(q-1)$.
\begin{proof}
This statement follows from expressing $\leakage{1}+\leakage{2}$ in terms of the KL-divergence and simplifying for distributions of the form given in \cref{eq:dependent_on_0,eq:dependent_on_nz}. %
\end{proof}
\end{lemma}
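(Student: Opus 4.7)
The plan is to start from the definition
\[
\totalleakage \;=\; \mutinf(\rvR;\rvA) + \mutinf(\rvA+\rvR;\rvA) \;=\; \sum_{a,r \in \Fq} \pa(a)\,\pra[r][a]\,\log \frac{\pra[r][a]}{\pr(r)} \;+\; \sum_{a,b \in \Fq} \pa(a)\,\pra[b-a][a]\,\log \frac{\pra[b-a][a]}{\papr(b)},
\]
where the second equality uses $\Pr\{\rvA+\rvR = b \mid \rvA = a\} = \Pr\{\rvR = b-a \mid \rvA = a\}$. The strategy is then to enumerate the terms case-by-case using the structure of \cref{eq:dependent_on_0,eq:dependent_on_nz}, using \cref{lemma:sparsity} (and its $\rvA+\rvR$ counterpart) to handle the marginal denominators.

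First, I would verify that $\Pr\{\Rij = r\} = \srinv$ for every $r \in \Fqstar$ and that $\Pr\{\Aij+\Rij = b\} = \sarinv$ for every $b \in \Fqstar$. The zero-case is \cref{lemma:sparsity}; for the nonzero cases one expands
\[
\Pr\{\Rij = r\} = \pzinv \, s + \pra[r][-r]\,\tfrac{1-s}{q-1} + \!\!\sum_{\substack{a \in \Fqstar \\ a \neq -r}}\!\! \pra[r][a]\,\tfrac{1-s}{q-1},
\]
and analogously for $\rvA+\rvR$. The assumed symmetry $\pa(a) = (1-s)/(q-1)$ on $\Fqstar$ together with the definitions $\pzinv, \pnzinv$ collapses each expression to the claimed uniform value, which is the key reason only two distinct values $(\sr,\srinv)$ resp.\ $(\sar,\sarinv)$ appear in the denominators.

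Next, I would split each of the two mutual-information sums according to whether $\Aij$ is zero or nonzero and, in the nonzero case, according to whether $\Rij \in \{0,-\Aij\}$ or neither. For $\mutinf(\rvR;\rvA)$ the $a=0$ block contributes
\[
s\,\bigl[\pz\,\log(\pz/\sr) + (q-1)\,\pzinv\,\log(\pzinv/\srinv)\bigr],
\]
while the $a \neq 0$ block contributes, after summing the $(1-s)/(q-1)$ factor over all $q-1$ nonzero values of $a$,
\[
(1-s)\,\bigl[\pext \log(\pext/\sr) + \pnz \log(\pnz/\srinv) + (q-2)\,\pnzinv \log(\pnzinv/\srinv)\bigr].
\]
The same bookkeeping applied to $\mutinf(\rvA+\rvR;\rvA)$ yields symmetric terms in which every $\sr$ is replaced by $\sar$ and, crucially, the roles of $\pext$ and $\pnz$ swap in terms of which denominator is ``zero'' vs.\ ``nonzero,'' because $\Rij=0$ corresponds to $\Aij+\Rij = a \neq 0$ whereas $\Rij = -\Aij$ corresponds to $\Aij+\Rij = 0$.

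Finally, I would add the two sums and group terms by the coefficient in front ($\pz$, $\pzinv$, $\pext$, $\pnz$, $\pnzinv$) to obtain the bracketed expression in the lemma statement. The proof is therefore essentially a bookkeeping exercise; the only subtle step is the case-splitting in the $a\neq 0$ block of $\mutinf(\rvA+\rvR;\rvA)$, where one must be careful to note that $b = \Aij$ is a nonzero value (so the marginal is $\sarinv$) while $b=0$ is the sparse value (marginal $\sar$), which is exactly where the apparent asymmetry between the $\pext$ and $\pnz$ terms in the final formula originates.
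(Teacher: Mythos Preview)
Your proposal is correct and follows exactly the approach the paper sketches: writing $\leakage{1}+\leakage{2}$ as two KL-divergences and then simplifying term-by-term using the structure of the PMF in \cref{eq:dependent_on_0,eq:dependent_on_nz}. The paper's own proof is only a one-line remark to this effect, so your case-splitting (zero vs.\ nonzero $a$, and within the latter $r\in\{0,-a\}$ vs.\ otherwise) together with the observation that the marginals of $\rvR$ and $\rvA+\rvR$ are uniform on $\Fqstar$ is precisely the missing detail.
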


Furthermore, by considering PMFs of the form given in \cref{eq:dependent_on_0,eq:dependent_on_nz}, the constraints simplify to
\begin{align}
    c_1(\pz,\pzinv) &\triangleq \pz + (q-1) \pzinv - 1 &&= 0, \label{eq:constraint1} \\
    c_2(\pext,\pnz,\pnzinv) &\triangleq \pext + \pnz + (q-2) \pnzinv - 1 &&= 0, \label{eq:constraint2} \\
    c_3(\pz,\pnz) &\triangleq \pz s + \pnz (1-s) - \sr &&= 0, \label{eq:constraint3} \\
    c_4(\pz,\pext) &\triangleq \pz s + \pext (1-s) - \sar &&= 0. \label{eq:constraint4}
\end{align}

For clarity of presentation, we define the following quantities that will be helpful in stating the result of \cref{thm:optimal_pmf}.
\begin{align*}
\qfac & \triangleq (q-2)^2/(q-1), \alpha \triangleq s^2 (4+\qfac), \delta \triangleq -\qfac \sar \sr,\\
\beta & \triangleq 4s(1-s-\sar-\sr)-\qfac s (\sr+\sar+s),\\
\gamma & \triangleq (1-s-\sar-\sr)^2+\qfac(s\sar + s\sr + \sar \sr).
\end{align*}

\begin{theorem} \label{thm:optimal_pmf}
Given desired sparsities $\sar$ and $\sr$, the optimal PMF of the form given in \cref{eq:dependent_on_0,eq:dependent_on_nz} that minimizes the leakage is obtained by setting $\pz=\pz^\star$ which is the root of the polynomial $\alpha \pz^3 + \beta \pz^2 + \gamma \pz + \delta$ that satisfies $\max\{\sr+\sar-1+s,0\} \leq 2 \pz s \leq 2\min\{s,\sr,\sar\}$. Having $\pz$, the remaining unknowns $\pzinv,\pext,\pnz$ and $\pnzinv$ directly follow from \eqref{eq:constraint1}-\eqref{eq:constraint4}.
\begin{proof}
To find the distribution of $\rvR$ that minimizes the leakage, we utilize the method of Lagrange multipliers to combine the objective function, i.e., the total leakage from \cref{lemma:leakage} with the constraints in \eqref{eq:constraint1}-\eqref{eq:constraint4}. Thus, the objective function to be minimized can be expressed as
\begin{align*}
    \lossabbrev &\triangleq \loss \\
    & = \totalleakage + \lambda_1 c_1(\pz,\pzinv) + \\
    & + \lambda_2 c_2(\pext,\pnz,\pnzinv) + \lambda_3 c_3(\pz,\pnz) + \lambda_4 c_4(\pz,\pext).
\end{align*}

This objective can be minimized by setting the gradient to zero, i.e., $\grad \lossabbrev = 0$, which amounts to solving a system of nine equations with nine unknowns. Simplifying $\nabla_{\pz,\pzinv,\pext,\pnz,\pnzinv} \lossabbrev = 0$ yields
\begin{align}
    \pz (\pnzinv)^2 &= \pzinv \pext \pnz. \label{eq:grad_obj_compact}
\end{align}

Solving \eqref{eq:constraint1}-\eqref{eq:constraint4} for $\pzinv,\pext,\pnz,\pnzinv$ as a function of $\pz$ (i.e., each a polynomial of degree one), and inserting the result into \eqref{eq:grad_obj_compact} amounts to finding the root of a polynomial of degree three in $\pz$ that satisfies the constraints, as stated in the theorem. Since the problem is convex, the parameters yielding the local minimum must be a global minimizer to the constrained optimization problem.
\end{proof}
\end{theorem}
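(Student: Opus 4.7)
The plan is to apply the method of Lagrange multipliers to the convex minimization problem whose objective is the total leakage from Lemma~\ref{lemma:leakage} and whose constraints are the four equalities \eqref{eq:constraint1}--\eqref{eq:constraint4}. First I would form the Lagrangian $\lossabbrev$ with multipliers $\lambda_1,\dots,\lambda_4$ attached to $c_1,c_2,c_3,c_4$, and then compute the five stationarity conditions $\partial \lossabbrev / \partial \pz = 0$, $\partial \lossabbrev / \partial \pzinv = 0$, $\partial \lossabbrev / \partial \pext = 0$, $\partial \lossabbrev / \partial \pnz = 0$, $\partial \lossabbrev / \partial \pnzinv = 0$. Each derivative of the leakage expression yields a logarithmic term of the form $\log(p/\cdot) + 1$ plus a linear combination of multipliers; the key observation is that $\sr$ and $\sar$ depend linearly on $\pz,\pnz,\pext$, so the derivatives of the $\log \sr$ and $\log \sar$ terms generate additional structure that must be tracked carefully.

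Second, I would eliminate the four Lagrange multipliers from the five stationarity equations. The cleanest route is to form differences of pairs of equations that share the same multipliers: for example, combining the $\pz$ and $\pzinv$ conditions eliminates $\lambda_1$, and similarly the $\pext,\pnz,\pnzinv$ conditions can be combined in two independent ways to eliminate $\lambda_2$. After these eliminations, the logarithmic terms combine via $\log x - \log y = \log(x/y)$ to produce a single multiplicative identity among the five probabilities. My target is the compact relation \eqref{eq:grad_obj_compact}, namely $\pz \, (\pnzinv)^2 = \pzinv \, \pext \, \pnz$, which should emerge after exponentiating the resulting log-equation.

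Third, I would use the linear constraints \eqref{eq:constraint1}--\eqref{eq:constraint4} to express $\pzinv, \pnz, \pext, \pnzinv$ each as an affine function of $\pz$ alone. Concretely, \eqref{eq:constraint1} gives $\pzinv = (1-\pz)/(q-1)$, \eqref{eq:constraint3} gives $\pnz = (\sr - \pz s)/(1-s)$, \eqref{eq:constraint4} gives $\pext = (\sar - \pz s)/(1-s)$, and then \eqref{eq:constraint2} gives $\pnzinv$ as an affine function of $\pz$. Substituting these four affine expressions into \eqref{eq:grad_obj_compact} yields a polynomial equation in the single unknown $\pz$. The left-hand side is cubic in $\pz$ (linear times quadratic), and the right-hand side is also cubic (product of three linear terms), so after clearing denominators and collecting terms I expect a cubic of the form $\alpha \pz^3 + \beta \pz^2 + \gamma \pz + \delta = 0$ with coefficients matching the definitions of $\alpha,\beta,\gamma,\delta$ preceding the theorem. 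The feasibility window $\max\{\sr+\sar-1+s,0\} \le 2\pz s \le 2\min\{s,\sr,\sar\}$ then follows from requiring $\pnz, \pext, \pzinv, \pnzinv \ge 0$ in these affine expressions.

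Finally, I would invoke convexity: the objective (a sum of KL divergences in the variables) is convex and the constraint set is affine, so any KKT point is a global minimizer, and the valid root of the cubic in the stated interval supplies $\pz^\star$ from which the remaining parameters are read off by \eqref{eq:constraint1}--\eqref{eq:constraint4}. The main obstacle I anticipate is step two: the bookkeeping needed to get the five stationarity conditions to collapse cleanly into the single relation \eqref{eq:grad_obj_compact}, because the $\sr$- and $\sar$-terms in the leakage expression couple the derivatives in $\pz,\pnz,\pext$ through the chain rule, and one must verify that the multipliers $\lambda_3,\lambda_4$ cancel in precisely the right combinations for the logarithmic identity to reduce to a pure product relation.
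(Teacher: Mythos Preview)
Your proposal is correct and follows essentially the same route as the paper: form the Lagrangian with the four equality constraints, extract from the five stationarity conditions the compact multiplicative relation $\pz(\pnzinv)^2 = \pzinv\pext\pnz$, substitute the affine expressions obtained from \eqref{eq:constraint1}--\eqref{eq:constraint4} to reduce to a cubic in $\pz$, and invoke convexity for global optimality. One small simplification relative to your anticipated obstacle: in the Lagrangian, $\sr$ and $\sar$ are the \emph{fixed} target sparsities (the constraints $c_3$ and $c_4$ are precisely what force the actual sparsities to equal them), so when you differentiate the leakage expression of Lemma~\ref{lemma:leakage} they are constants and no chain-rule coupling through $\pz,\pext,\pnz$ arises; the elimination of $\lambda_1,\dots,\lambda_4$ is therefore cleaner than you fear.
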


Our numerical results show that when the entries of $\rvA$ are i.i.d. with the non-zero entries being uniformly distributed over $\Fq^*$, optimizing the leakage over the PMFs as in \cref{eq:dependent_on_0,eq:dependent_on_nz} yields the same result as optimizing the leakage over all PMFs with $q^2$ unknowns $\pra$. This justifies the usage of PMFs as in \cref{eq:dependent_on_0,eq:dependent_on_nz}.

When dividing the shares into multiple matrices called sub-shares and applying fractional repetition codes to distribute the smaller sub-shares to storage nodes, every \worker\ stores multiple sub-shares as detailed in \cref{sec:scheme}. To this end, we focus on a desired average sparsity $\savg \triangleq \frac{\sr+\sar}{2}$ and analyze the minimum possible leakage for varying differences of the sparsities\footnote{As the construction is symmetric, we choose $\sar \geq \sr$ without loss of generality.} $\sdiff \triangleq \sar - \sr$. We plot in \cref{fig:converse} the optimal leakage according to \cref{lemma:leakage} obtained by the PMF in \cref{thm:optimal_pmf} as a function of the average sparsity $\savg$ for two different values of $\sdiff$.

\begin{remark}
We observe that the minimum possible total leakage is obtained for $\sar = \sr$, i.e., according to \cref{lemma:sparsity} if $\pext=\pnz$, and consequently when the leakages $\leakage{1}$ and $\leakage{2}$ are equal.
If one cares about the maximum leakage, i.e., $\max\{\leakage{1}, \leakage{2}\}$, then our observation implies that the parameters for $\sr=\sar$ that minimize the total leakage also minimize the maximum leakage.
\end{remark}

\begin{figure}[t]
    \centering
    \resizebox{.48\textwidth}{!}{\input{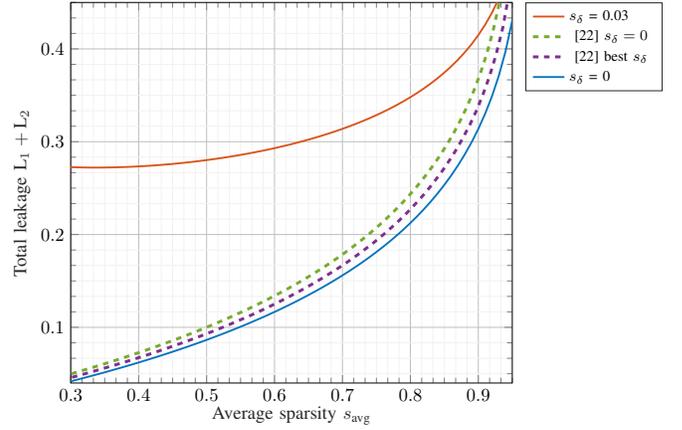}}
    \caption{We provide the optimal element-wise total leakage $(\leakage{1}+\leakage{2})/ \m \n$ over different values of $\savg$ with varying $\sdiff$ for $q=256$ and $s=0.95$ (solid lines). We compare our bounds with the element-wise total leakage of the scheme in~\cite{sparse_ISIT} for its best $\sdiff$ and for $\sdiff =0$ (dashed lines).}
    \label{fig:converse}
\end{figure}

\section{Sparse Secret Sharing Schemes}\label{sec:scheme}
We show how to divide $\rvR$ and $\rvA+\rvR$ created as explained previously and combine them with a fractional repetition code to obtain a sparse secret sharing scheme for a desired even number of storage nodes $\nbworkers$ and a desired parameter $k> \frac{N}{2}$. Extra care should be taken in distributing the sub-shares to avoid leaking extra information about $\rvA$.

\subsection{Creation and Distribution of the Sub-Shares}\label{subsec:colluding}

To create the sub-shares, the share $\bfR$ is divided into $\nbworkers$ sub-shares $\bfR_0,\dots, \bfR_{\nbworkers-1}$, i.e., $\bfR = \left[\bfR_0^\transpose, \bfR_1^\transpose,\dots,\bfR_{\nbworkers-1}^\transpose\right]^\transpose$. Similarly, the share $\bfA + \bfR$ is divided row-wise into $\nbworkers$ sub-shares $(\bfA+\bfR)_0, \dots, (\bfA+\bfR)_{\nbworkers-1}$. To create an $(n,k,z)=(\nbworkers,\nbworkers,1)$ sparse secret scheme, we distribute the sub-shares as follows. For all $i\in \{0,\dots,\nbworkers-1\}$, \worker\ $i$ stores the sub-shares $(\bfA+\bfR)_i$ and $\bfR_{\nbworkers/2 + i \mod \nbworkers}$. Note that the distribution of the sub-shares to the nodes can be arbitrary as long as each node obtains two sub-shares and no node obtains $\bfR_i$ and $(\bfA+\bfR)_i$. However, we chose this particular task distribution since it can be easily expanded to decrease the value of $k$ at the expense of increasing the number of sub-shares given to each storage node. Having schemes with $k<\nbworkers$ allows tolerance of slow or unresponsive nodes referred to as stragglers.

We use a fractional repetition codes. To tolerate $\str$ stragglers, i.e., set $k=\nbworkers-\str$, each node $i\in \{0,\dots,\nbworkers-1\}$ obtains the following set of sub-shares $\setsubsharesar{i} \triangleq \{(\bfA+\bfR)_{i+j\mod \nbworkers}\}_{j=0}^\str$ and $\setsubsharesr{i} \triangleq \{\bfR_{\nbworkers/2+i+ j\mod \nbworkers}\}_{j=0}^\str$. Observe that for $\str \in \{0,\dots,\frac{\nbworkers}{2}-1\}$ this distribution does not give sub-shares $\bfR_i$ and $(\bfA+\bfR)_i$ to any \worker. Hence, for $\str \in \{0,\dots,\frac{\nbworkers}{2}-1\}$ the proposed share distribution results in an $(n,k,z) = (\nbworkers, \nbworkers-\str, 1)$ sparse secret sharing scheme. An illustration of scheme for $\nbworkers=2$ and $\str = 1$ is illustrated in \cref{fig:fractional_repetition_sparse}.

\subsection{Analysis of the proposed scheme}\label{subsec:privacy_scheme}
We analyze the properties of the proposed sparse secret sharing scheme in terms of the information leakage, straggler tolerance and storage cost.

\begin{theorem}\label{thm:scheme_analysis}
Let $\bfR \in \Fq^{\m\times \n}$ and $\bfA+\bfR \in \Fq^{\m\times \n}$ be two shares encoded as shown in \cref{sec:trade-off} with average sparsity $\savg$. Given an even number of storage nodes $\nbworkers$ and an integer $\str \in \{0,\dots,\frac{\nbworkers}{2}-1\}$, storing the sub-shares $\setsubsharesar{i} \triangleq \{(\bfA+\bfR)_{i+j\mod \nbworkers}\}_{j=0}^\str$ and $\setsubsharesr{i} \triangleq \{\bfR_{\nbworkers/2+i+ j\mod \nbworkers}\}_{j=0}^\str$ on \worker\ $i\in \{0,\dots,\nbworkers-1\}$ results in an $(n,k,z) = (\nbworkers,\nbworkers-\str,1)$ sparse secret sharing with information leakage equal to $\dfrac{(\str+1)(\leakage{1}+\leakage{2})}{\nbworkers}$ and average storage cost (number of bits stored per node) bounded from above as
\begin{equation}\label{eq:storage_cost}
    \scalemath{0.9}{\mathbb{E}[C_\text{sparse}] = (2\str+2)(1-\savg)\dfrac{\m \n}{\nbworkers}\left(\left\lceil\log_2(q)\right\rceil + \left\lceil\log_2\left(\dfrac{\m\n}{\nbworkers}\right)\right\rceil\right)}.
\end{equation}
The average storage cost of this scheme is smaller than that of a non-sparse secret sharing scheme for an average sparsity satisfying
\begin{equation}
    \savg > 1 - \dfrac{1}{2\str}\cdot\dfrac{\nbworkers}{\nbworkers-\str}\dfrac{\log_2(q)}{\log_2(q) + \lceil\log_2(\nicefrac{\m\n}{\nbworkers})\rceil}. 
\end{equation}
\end{theorem}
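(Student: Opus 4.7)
The plan is to establish the three claims of the theorem separately, exploiting the i.i.d.\ structure of $\bfA$ and the symmetry of the cyclic fractional-repetition assignment described in \cref{subsec:colluding}.

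First, I would verify that the construction realizes $(n,k,z) = (\nbworkers, \nbworkers-\str, 1)$. For reconstruction from any $k$ nodes, observe that each sub-share $(\bfA+\bfR)_j$ is stored on the $\str+1$ consecutive nodes $\{j - j' \bmod \nbworkers: 0 \leq j' \leq \str\}$, and $\bfR_j$ on the $\str+1$ consecutive nodes starting at $j - \nbworkers/2$; hence removing any $\str$ nodes cannot erase all $\str+1$ copies of any sub-share, so the surviving $\nbworkers - \str$ nodes collectively hold the full $\bfR$ and $\bfA+\bfR$ and can reconstruct $\bfA$. For the privacy parameter $z=1$, I would check that the index windows $\{i, \dots, i+\str\}$ and $\{\nbworkers/2+i, \dots, \nbworkers/2+i+\str\}$ (both modulo $\nbworkers$) held by any single \worker\ are disjoint precisely when $\str \leq \nbworkers/2 - 1$, so no node ever sees matching $\bfR_j$ and $(\bfA+\bfR)_j$.

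Next, I would quantify the per-node information leakage. Since the rows of $\bfA$ are i.i.d.\ and the construction of $\bfR$ acts entry-wise, mutual information across rows is additive, so a sub-share of $\m/\nbworkers$ rows of $\bfR$ (resp.\ $\bfA+\bfR$) leaks $\leakage{1}/\nbworkers$ (resp.\ $\leakage{2}/\nbworkers$) about $\bfA$. Node $i$ holds $\str+1$ sub-shares of each type, and by the disjointness argument of the previous step these cover disjoint row ranges of $\bfA$; the contributions therefore add, yielding $(\str+1)(\leakage{1}+\leakage{2})/\nbworkers$.

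Third, I would compute the storage cost using a standard sparse representation: each non-zero entry is encoded as a pair consisting of its position within a sub-share ($\lceil\log_2(\m\n/\nbworkers)\rceil$ bits) and its value ($\lceil\log_2 q\rceil$ bits). The expected number of non-zeros is $(1-\sr)\m\n/\nbworkers$ in an $\bfR$-sub-share and $(1-\sar)\m\n/\nbworkers$ in an $(\bfA+\bfR)$-sub-share; summing over the $\str+1$ sub-shares of each type per node and using $(1-\sr)+(1-\sar) = 2(1-\savg)$ yields \eqref{eq:storage_cost}. The final inequality is then obtained by setting this expected cost below the storage cost of a classical dense secret-sharing scheme with the same $(n,k,z)$ parameters (whose per-node size is on the order of $\m\n\log_2(q)/(\nbworkers-\str)$ bits) and solving for $\savg$. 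The main obstacle is precisely this comparison step: the exact form of the stated threshold depends on which dense baseline one adopts (a classical one-share-per-node $(n,k,z)$ scheme versus the same fractional-repetition structure with dense sub-shares), and some care with the arithmetic is required to reproduce the factors $2\str$ and $\nbworkers-\str$ appearing in the claim. The remaining parts are essentially bookkeeping: reconstruction and privacy reduce to modular arithmetic on the cyclic offset $\nbworkers/2$, and the leakage decomposition is immediate once the i.i.d.\ row structure of $\bfA$ is invoked.
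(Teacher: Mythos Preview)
Your proposal is correct and follows essentially the same route as the paper: the cyclic replication argument for reconstruction, the i.i.d.\ additivity for leakage, and the naive (value, index) sparse encoding for the storage bound are all exactly what the paper does. Your explicit check that the two index windows at each node are disjoint for $\str\le \nbworkers/2-1$ is actually more careful than the paper, which simply asserts this; and your caveat about the comparison step is well-placed, since the paper takes $C_{\text{classical}}=\frac{\m\n}{\nbworkers-\str-1}\lceil\log_2 q\rceil$ (not $\nbworkers-\str$) as the baseline and the resulting threshold in the statement does not match that algebra cleanly either.
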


\begin{proof}

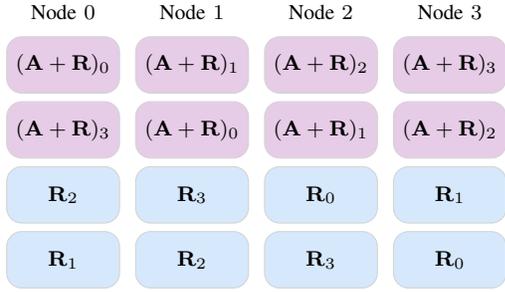
\begin{figure}[t]
    \begin{adjustbox}{center=.95\textwidth}
    	\begin{tikzpicture}[>=stealth', auto,
		triangle/.style = {fill=white, regular polygon, regular polygon sides=3 }]
		\definecolor{blue}{rgb}{0.19, 0.55, 0.91}
		\definecolor{plum}{rgb}{0.8, 0.6, 0.8}
		\def\blue{\color{blue}}
		\def\ogreen{\color{ogreen}}
		\def\orange{\color{orange}}
		\def\nred{\color{plum}}
		\definecolor{lightgray}{rgb}{0.83, 0.83, 0.83}
		\def\mx{0.4}
		\def\my{1}
		\tikzset{auto shift/.style={auto=right,->,
				to path={ let \p1=(\tikztostart),\p2=(\tikztotarget),
					\n1={atan2(\y2-\y1,\x2-\x1)},\n2={\n1+180}
					in ($(\tikztostart.{\n1})!1mm!270:(\tikztotarget.{\n2})$) -- 
					($(\tikztotarget.{\n2})!1mm!90:(\tikztostart.{\n1})$) \tikztonodes}}}
		\tikzstyle{matrixa} = [rectangle, rounded corners = 2mm, draw = lightgray, minimum height = 2.2cm, minimum width = 1.5 cm, fill = blue!40]
		\tikzstyle{matrixr} = [rectangle, rounded corners = 2mm, draw = lightgray, minimum height = 2.2cm, minimum width = 1.5 cm, fill = blue!20]
		
		\tikzstyle{matrixa1} = [rectangle, rounded corners = 1mm, draw = lightgray, minimum height = 0.75 cm, minimum width = 1.5 cm, fill = blue!40]
		\tikzstyle{matrixr1} = [rectangle, rounded corners = 2mm, draw = lightgray, minimum height = 0.75cm, minimum width = 1.5 cm, fill = blue!20]
				
		\tikzstyle{matrixa2} = [rectangle, rounded corners = 1mm, draw = lightgray, minimum height = 0.75cm, minimum width = 1.5 cm, fill = blue!40]
		
		\tikzstyle{matrixb} = [rounded corners = 2mm, draw = lightgray, minimum height = 1.5cm, minimum width = 0.3 cm, fill = ogreen!30]
		
		\tikzstyle{tasks1} = [rounded corners = 2mm, draw = lightgray, minimum height = 0.75cm, minimum width = 1.5 cm, fill = plum!50]
		
		\tikzstyle{tasks2} = [rounded corners = 2mm, draw = lightgray, minimum height = 0.75cm, minimum width = 1.5 cm, fill = blue!20]
		
 		\tikzstyle{server} = [fill=black!10, rectangle, rounded corners=4mm, draw,minimum width=2em, minimum height=2.5em]
 		
 		\node[inner sep=0pt, font=\footnotesize] at (0,0) (w1){};
 		\node[below =-0.3 of w1] (ww'2){};
 		\node[inner sep=0pt, left = 2 of w1, font=\footnotesize] (w'1){};
 		\node[below =-0.3 of w'1] (ww'1){};
 		\node[inner sep=0pt, right = 3cm of w1, font=\footnotesize] (w2){};
 		\node[below =-0.3 of w2] (ww'N){};
 		\node[inner sep=0pt, right = 3cm of w2, font=\footnotesize] (w3){};
 		\node[below =-0.3 of w3] (ww1){};
 		\node[inner sep=0pt, right = 2cm of w3, font=\footnotesize] (w4){};
 		\node[below =-0.3 of w4] (ww2){};
 		\node[inner sep=0pt, right = 3cm of w4, font=\footnotesize] (w5){};
 		\node[below =-0.3 of w5] (wwN){}; 
 		\node[tasks1, below left = 0 and 0 of ww'1] (A1+R1) {\footnotesize$(\bfA+\bfR)_{0}$};
 		\node[tasks1, right = 0.2 of A1+R1] (A2+R2) {\footnotesize$(\bfA+\bfR)_{1}$};
 		\node[tasks1, right = 0.2 of A2+R2] (A3+R3) {\footnotesize$(\bfA+\bfR)_{2}$};
 		\node[tasks1, right = 0.2 of A3+R3] (A4+R4) {\footnotesize$(\bfA+\bfR)_{3}$};
 		\node[tasks1, below = 0.1 of A1+R1] (A'1+R'1) {\footnotesize$(\bfA+\bfR)_{3}$};
 		\node[tasks1, below = 0.1 of A2+R2] (A'2+R'2) {\footnotesize$(\bfA+\bfR)_{0}$};
 		\node[tasks1, below = 0.1 of A3+R3] (A'3+R'3) {\footnotesize$(\bfA+\bfR)_{1}$};
 		\node[tasks1, below = 0.1 of A4+R4] (A'4+R'4) {\footnotesize$(\bfA+\bfR)_{2}$};
 		\node[tasks2, below = 0.1 of A'1+R'1] (R3) {\footnotesize$\bfR_{2}$};
 		\node[tasks2, below = 0.1 of A'2+R'2] (R4) {\footnotesize$\bfR_{3}$};
		\node[tasks2, below = 0.1 of A'3+R'3] (R1) {\footnotesize$\bfR_{0}$};
		\node[tasks2, below = 0.1 of A'4+R'4] (R2) {\footnotesize$\bfR_{1}$};
		
		\node[tasks2, below = 0.1 of R3] (R'2) {\footnotesize$\bfR_{1}$};
 		\node[tasks2, below = 0.1 of R4] (R'3) {\footnotesize$\bfR_{2}$};
		\node[tasks2, below = 0.1 of R1] (R'4) {\footnotesize$\bfR_{3}$};
		\node[tasks2, below = 0.1 of R2] (R'1) {\footnotesize$\bfR_{0}$};
		
 		\node[above = 0.1 of A1+R1] (W'1) {\footnotesize {\Worker} $0$};
 		\node[above = 0.1 of A2+R2] (W'2) {\footnotesize {\Worker} $1$};
 		\node[above = .1 of A3+R3] (W'3) {\footnotesize {\Worker} ${2}$};
 		\node[above = .1 of A4+R4] (W'4) {\footnotesize {\Worker} ${3}$};

		\end{tikzpicture}
    \end{adjustbox}
    \caption{An illustration of the proposed sparse secret sharing scheme for $\nbworkers=4$ {\workers} and $\str = 1$. Observe that the storage of any $3$ {\workers} is enough to reconstruct the secret $\bfA$.}
    \label{fig:fractional_repetition_sparse}
\end{figure}

We start by proving that the storage of any $\nbworkers-\str$ nodes is enough to reconstruct the secret. Our share distribution can be seen as a fractional repetition scheme with cyclic shifts. More precisely, each sub-share is replicated $\str+1$ times and is given to $\str+1$ distinct nodes. Therefore, any collection of $\nbworkers-\str$ workers stores all the sub-shares. Having access to all the sub-shares, we can obtain the shares $\bfR$ and $\bfA+\bfR$ and hence obtain $\bfA = \bfA+\bfR-\bfR$.

Since we focus on the case of $z=1$, we only need to quantify the per-node leakage. Note that each node obtains $\str+1$ sub-shares of the form $\bfR_i$ and $\str+1$ sub-shares of the form $(\bfA+\bfR)_i$. Since the entries of $\rvA$ and $\rvR$ are assumed to be i.i.d., the sub-shares are independent and cannot be combined to obtain more information about $\rvA$. Hence, the per-node information leakage is the sum of the information leaked from each sub-share and is equal to $\dfrac{(\str+1)(\leakage{1}+\leakage{2})}{\nbworkers}$.

To obtain an upper bound on the average storage cost of our sparse secret sharing scheme, we assume a naive compression scheme\footnote{There exist better compression algorithms that are limited to special criteria of the sparse matrices. However, we consider the general case as a worst-case scenario.}. A sparse matrix is compressed by storing only the value of the non-zero entries and their corresponding indices. The average storage cost of each sub-share is then the average number of non-zero entries multiplied by $\left\lceil\log_2(q)\right\rceil$ bits to store their values plus $\left\lceil\log_2(\m\n/ \nbworkers)\right\rceil$ bits to store their index values. Note that here the index ranges from zero to $\m\n / \nbworkers$ since each sub-share is a matrix of dimension $\m / \nbworkers \times \n$. Thus, since each node stores $2\str+2$ sub-shares, the average storage cost of our sparse secret sharing scheme $\mathbb{E}[C_\text{sparse}]$ is bounded from above as in \cref{eq:storage_cost}.

On the other hand, a classical $(n,k,z)=(\nbworkers, \nbworkers-\str,1)$ secret sharing scheme has a storage cost $C_\text{classical} = \frac{\m\n}{\nbworkers - \str-1}\lceil\log_2(q)\rceil$. As a result, $\mathbb{E}[C_\text{sparse}] < C_\text{classical}$ holds when $\savg$ satisfies the following
\begin{equation}
    s_\text{avg} > 1 - \dfrac{1}{2\str}\cdot\dfrac{\nbworkers}{\nbworkers-\str}\cdot\dfrac{\log_2(q)}{\log_2(q) + \lceil\log_2(\nicefrac{\m\n}{\nbworkers})\rceil}. 
\end{equation}
\cref{tab:param} shows the minimal values of $\savg$ for which our scheme is beneficial for specific parameters and $\m\n=10^{20}$. The relative leakage $\bar{\varepsilon}$ shown in \cref{tab:param} is equal to the information leakage normalized by $\entropy(\rvA)$.%
\end{proof}

\begin{table}[t]
    \centering
    \small
    \caption{Minimum average sparsity levels for the shares and their underlying minimum relative leakage for $\m\n = 10^{20}$.}
    \begin{tabular}{ccccccc}
    \FL
    $\log_2(q)$ & $s$ & $\str$ & $N$ & $s_\text{avg}$ & $\bar{\varepsilon}$ \ML\midrule \\[-2ex]
    $32$ & $0.95$ & $2$ & $60$ & $0.9396$ & $0.0135$ \\ \midrule
    $32$ & $0.95$ & $4$ & $60$ & $0.9625$ & $0.0502$  \\ \midrule
    $32$ & $0.99$ & $4$ & $60$ & $0.9625$ & $0.0206$  \\ \midrule
    $32$ & $0.99$ & $5$ & $100$ & $0.9692$ & $0.0206$ \\ \midrule
    $20$ & $0.99$ & $5$ & $100$ & $0.9778$ & $0.0176$ \\\bottomrule
    \end{tabular}
    \label{tab:param}
\end{table}

\section{Conclusion and Future Directions}\label{sec:conc}

We investigated a sparsity-preserving secret sharing scheme for private distributed storage of machine learning data. The goal is to create sparse shares that can be efficiently stored at the expense of relaxing the privacy requirement. We focused on secret sharing schemes that take as input a private matrix $\bfA$, generate a random matrix $\bfR$ of the same size and output two shares $\bfR$ and $\bfA+\bfR$. Assuming the non-zero entries of the private matrix $\bfA$ being uniformly distributed and under some mild restrictions on the conditional PMF of $\rvR$, we showed that there is a trade-off between the sparsity of the shares and the amount of information they leak about $\bfA$. For desired sparsity guarantees, we provided a convex optimization problem that gives the minimum leakage and solved this optimization for special distributions of $\rvR$. We observed that the optimal leakage can be achieved when $\rvR$ and $\rvA+\rvR$ have same sparsity levels $\sar=\sar$. Proving this observation analytically is left for future work. 

Furthermore, we showed how to divide the shares into sub-shares and distribute them to $\nbworkers$ nodes using fractional repetition codes to create an $(n,k,z)$ sparse secret sharing schemes with $k\in \{\frac{\nbworkers}{2}+1,\dots,\nbworkers\}$ and $z=1$. Compared to classical secret sharing, for large values of the desired sparsity, our scheme saves on storage at the expense of relaxing the privacy requirement. Improving on the rate of sparsity-preserving secret sharing schemes and designing more efficient suitable coding schemes for straggler tolerance are left for future investigation.

\balance
\bibliographystyle{ieeetr}
\bibliography{IEEEabrv,main}

\end{document}